\newcommand{\ejrls}{{{\textsc{LS-PAV}}}}
\newcommand{\pavscore}{{{\mathrm{pav\text{-}sc}}}}
\newcommand{\aver}[1]{{{\mathrm{avs}_{#1}}}}
\title{Optimal Average Satisfaction 
and Extended Justified Representation in Polynomial Time}
\author{Piotr Skowron$^1$ \and Martin Lackner$^2$ \and Edith Elkind$^2$ \and Luis S\'anchez-Fern\'andez$^3$}
\institute{$^1$TU Berlin, Germany \quad $^2$University of Oxford, UK \\ $^3$Universidad Carlos III de Madrid, Spain}
\begin{document}

\maketitle
\thispagestyle{plain}

\begin{abstract}
In this short note, we describe an approval-based committee selection rule 
that admits a polynomial-time algorithm and satisfies the Extended Justified Representation (EJR)
axiom. This rule is based on approximately maximizing the PAV score, by means of local search.
Our proof strategy is to show that
this rule provides almost optimal average satisfaction to all cohesive groups of voters, 
and that high average satisfaction for cohesive groups implies extended justified representation.
\end{abstract}

\section{Introduction}
Committee selection rules, i.e., rules that, given a collection of voters' preferences over candidates,
output a fixed-size set of winners (a committee), have received a considerable amount of attention in the last 
few years~\cite{FSST-trends}. For approval-based committee selection rules, i.e., rules where
each voter reports a subset of candidates that she approves of, an influential recent paper of 
Aziz et al.~\cite{aziz:scw} has proposed an axiom called {\em extended justified representation (EJR)}.
Informally, this axiom requires that if there is a large group of voters whose preferences
have  substantial overlap, then this group should be well-represented in the committee.
Aziz et al.~\cite{aziz:scw} have established that, among existing committee selection rules, 
Proportional Approval Voting (PAV) is the only rule that satisfies this axiom. Unfortunately, 
computing the output of this rule is NP-hard~\cite{AGG+14a}, and, moreover, Aziz et al.~\cite{aziz:scw}
have established that verifying if a given committee provides EJR is computationally hard as well.
Therefore, it was conjectured that finding committees that provide EJR is NP-hard.

In this paper, we show that this conjecture is false: we describe a polynomial-time procedure
that is guaranteed to output a committee that provides EJR. Our proof makes use of a recently introduced
notion of {\em average satisfaction} \cite{aaai/SEL17-pjr,DBLP:journals/corr/SkowronLBPE16}: we show that our rule offers
very high average satisfaction to all groups of voters that are large and cohesive, and deduce
from this that it satisfies EJR. Thus, in a sense, the welfare guarantees provided by our rule
are even much stronger than those offered by EJR.

\section{Preliminaries}
An {\em election} is a pair $E = (N, C)$, where $N = \{1, \ldots, n\}$ is a set of \emph{voters} and 
$C = \{c_1, \ldots, c_m\}$ is a set of \emph{candidates}. Voters in $N$ have {\em approval preferences}:
each voter in $N$ approves a subset of candidates in $C$. 
For each $i\in N$ we denote by $A_i \subseteq C$ the set of candidates approved by voter $i$.
We are interested in procedures that, given an election $E$ and a positive integer $k$, $1\le k\le |C|$,
output a non-empty collection of 
subsets of candidates $W\subseteq C$ of size exactly $k$; such procedures are called 
{\em committee selection rules}.

Given an election $E=(N, C)$, we define the PAV-score of a committee $W\subseteq C$ as
\begin{align*}
\pavscore(W) = \sum_{i = 1}^n \sum_{j = 1}^{|A_i \cap W|} \frac{1}{j} \text{.}
\end{align*}

{\em Proportional Approval Voting (PAV)} is the committee selection rule that, given an election
$E$ and a committee size $k$, outputs a committee of size $W$ with the highest PAV-score;
finding a committee in the output of this rule is NP-hard~\cite{AGG+14a,owaWinner}. 

We say that a group of voters $V \subseteq N$ is {\em $\ell$-cohesive} if 
$|V| \geq \left\lceil \ell \cdot \frac{n}{k} \right\rceil$ and $|\bigcap_{i \in V}A_i| \geq \ell$.

\begin{definition}[S{\'a}nchez-Fern{\'a}ndez et al.~\cite{aaai/SEL17-pjr}]
  Given a committee $W$, the \emph{average satisfaction of a group of voters $V\subseteq N$ 
  with respect to $W$} is defined as 
  \begin{align*}
  \aver{W}(V) = \frac{1}{|V|}\sum_{i\in V} |A_i \cap W|.
  \end{align*}
\end{definition}

\begin{definition}[Aziz~et~al.~\cite{aziz:scw}]
  A committee $W$ provides \emph{extended justified representation} (EJR) if for every $\ell>0$ 
  and every $\ell$-cohesive group of voters $V$
  there exists a voter $i \in V$ who approves at least $\ell$ members of $W$, i.e., $|A_i\cap W|\geq \ell$.
  A committee selection rule {\em satisfies extended justified representation} 
  if for every election $E=(N, C)$ and every $k$ with $1\le k\le |C|$ all committees that 
  it outputs provide EJR. 
\end{definition} 

S\'anchez-Fern\'andez et al.~\cite{aaai/SEL17-pjr} show that if a committee $W$ provides EJR  
then for every $\ell>0$ and every $\ell$-cohesive group of voters $V$ it holds that 
the average satisfaction of $V$ with respect to $W$ is at least $\frac{\ell-1}{2}$.
We observe that, conversely, if a committee provides very high average satisfaction
to all cohesive groups then it provides EJR.

\begin{lemma}\label{lem:avrep<->ejr}
Consider an election $E=(N, C)$ and a committee $W$.
If for every $\ell$-cohesive group it holds that its average satisfaction with respect to $W$ 
is strictly greater than $\ell-1$, then $W$ provides EJR.
\end{lemma}

\begin{proof}
Let $V$ be an $\ell$-cohesive group. Since the average satisfaction of voters in $V$ is greater than $\ell-1$, 
there is at least one voter $i\in V$ with $|A_i\cap W|\geq \ell$.
\qed
\end{proof}

\section{Local search Algorithm for PAV}

We define a new rule, $\ejrls$, which is a local search algorithm for PAV (see Figure~\ref{alg:local-search}). 

\begin{figure}[thb]
\begin{algorithm}[H]
   \DontPrintSemicolon
   $W \leftarrow k$ \textrm{arbitrary candidates from} $C$\;
   \While{\textrm{there exist} $c \in W$ \textrm{and} $c' \in C \setminus W$ \textrm{such that} \\
         \quad\quad\quad $\pavscore\big((W \setminus \{c\}) \cup \{c'\}\big) \geq \pavscore(W) + \frac{n}{k^2}$}{
      $W \leftarrow (W \setminus \{c\}) \cup \{c'\}$\;
   }
   \Return{$W$}\;
\caption{$\ejrls$: a local search algorithm for PAV}\label{alg:local-search}
\end{algorithm}
\end{figure}

\begin{theorem}\label{thm:avrep}
Consider an election $E=(N, C)$ and a positive integer $k$, $1\le k \le |C|$.
Let $W$ be a winning committee chosen by $\ejrls$ on $(E, k)$.
Then for every $\ell>0$ and every $\ell$-cohesive group $V$ it holds that $\aver{W}(V)>\ell-1$, 
i.e., all $\ell$-cohesive groups have average satisfaction of more than $\ell-1$.
\end{theorem}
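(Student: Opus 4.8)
The plan is to argue from the local-optimality of $W$: I will show that if some $\ell$-cohesive group $V$ had $\aver{W}(V) \le \ell-1$, then $W$ would admit a swap raising the $\pavscore$ by at least $\frac{n}{k^2}$, contradicting the termination condition of $\ejrls$. Write $T = \bigcap_{i\in V} A_i$, so that $|T|\ge \ell$, and set $s_i = |A_i\cap W|$ for each voter $i$. If $T\subseteq W$ then every $i\in V$ already has $s_i\ge |T|\ge \ell$ and the claim is immediate, so I may assume there is a candidate $c'\in T\setminus W$. The termination condition then guarantees that for this fixed $c'$ and \emph{every} $c\in W$, the swap $(W\setminus\{c\})\cup\{c'\}$ raises the $\pavscore$ by strictly less than $\frac{n}{k^2}$.

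The key step, and the one place a naive argument fails, is how to exploit these $k$ inequalities. Removing a single cheapest candidate only yields a removal cost of about $\frac{n}{k}$, which is exactly cancelled by the gain of adding $c'$ and leaves nothing to beat the threshold. Instead I would \emph{average} the local-optimality inequality over all $c\in W$, obtaining $\sum_{c\in W}\big(\pavscore((W\setminus\{c\})\cup\{c'\})-\pavscore(W)\big) < \frac{n}{k}$. Writing the exact per-voter change of a swap (voter $i$ gains $\frac{1}{s_i+1}$ if it approves $c'$ but not $c$, and loses $\frac{1}{s_i}$ if it approves $c$ but not $c'$) and summing over $c\in W$ makes each voter with $c'\in A_i$ contribute $\frac{k-s_i}{s_i+1}$ on the positive side, while the negative part is bounded by $n-|V|$. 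Restricting the positive part to $V$ (legitimate since $\frac{k-s_i}{s_i+1}\ge 0$ because $s_i\le k$) and applying convexity of $x\mapsto\frac{1}{x+1}$ (Jensen) gives $\sum_{i\in V}\frac{1}{s_i+1}\ge \frac{|V|}{\aver{W}(V)+1}$. After the $|V|$ terms cancel, the averaged inequality collapses to $(k+1)\frac{|V|}{\aver{W}(V)+1}-n < \frac{n}{k}$.

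Finally I would simplify: the right-hand side becomes $n\frac{k+1}{k}$, so dividing by $k+1$ yields $\frac{|V|}{\aver{W}(V)+1}<\frac{n}{k}$, i.e.\ $\aver{W}(V)+1>\frac{k|V|}{n}$. Plugging in the cohesiveness bound $|V|\ge\lceil \ell\cdot\frac{n}{k}\rceil\ge \ell\cdot\frac{n}{k}$ gives $\aver{W}(V)+1>\ell$, which is exactly the desired strict inequality $\aver{W}(V)>\ell-1$. I expect the main obstacle to be spotting that one must average over all removed candidates rather than choosing the locally cheapest removal: this is what produces the extra factor $(k+1)$ and makes the algorithm's threshold $\frac{n}{k^2}$ precisely strong enough, with the remainder being the exact swap bookkeeping and a single use of Jensen's inequality.
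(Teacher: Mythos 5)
Your proposal is correct and follows essentially the same route as the paper's proof: the same exact per-voter swap bookkeeping, the same key idea of summing the swap gains over all $k$ removed committee members (yielding the $(k+1)\sum_{i\in V}\frac{1}{w_i+1}-n$ expression), the same Jensen/AM--HM step, and the same use of the cohesiveness bound $|V|\ge\lceil \ell n/k\rceil$. The only difference is cosmetic: you sum the $k$ termination inequalities and conclude $\aver{W}(V)>\ell-1$ directly, whereas the paper lower-bounds the total gain by $\frac{n}{k}$ under the assumption $\aver{W}(V)\le\ell-1$ and invokes the pigeonhole principle to contradict local optimality.
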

\begin{proof}
Assume for the sake of contradiction that for some $(E, k)$
$\ejrls$ outputs a committee $W$ such that there exists an $\ell$-cohesive group $V$ with $\aver{W}(V)\leq\ell-1$. 
Let $w_i = |W \cap A_i|$. 

As $V$ is $\ell$-cohesive, there exist $\ell$ candidates approved by all voters in $V$.
At least one such candidate does not appear in $W$, since otherwise we would have 
$\aver{W}(V)\geq\ell$. Let $c$ be some such candidate.
Now consider a candidate $c' \in W$.
If we remove $c'$ from the committee and add $c$ instead,
we increase the PAV-score of $W$ by 
\begin{align*}
\Delta(c, c') 
              &\ge \underbrace{\sum_{i \in V:c'\notin A_i} \frac{1}{w_i + 1} }_\text{adding $c$}  - \underbrace{\sum_{i\in N\setminus V\colon c' \in A_i} \frac{1}{w_i}}_\text{removing $c'$}  \\
              &= \sum_{i \in V} \frac{1}{w_i + 1} - \sum_{i\in N\colon c' \in A_i} \frac{1}{w_i} + \sum_{i \in V\colon c' \in A_i}\left(\frac{1}{w_i} - \frac{1}{w_i+1}\right).
\end{align*}
Note that $\Delta(c, c')$ may be negative for some $c'\in W$.
By the inequality between arithmetic and harmonic means we obtain
\begin{align}\label{eq:amhm}
\sum_{i \in V} \frac{1}{w_i + 1} \geq \frac{|V|^2}{\sum_{i \in V} (w_i +1)}= 
\frac{|V|}{\frac{\sum_{i \in V}w_i}{|V|}  + 1} = \frac{|V|}{\aver{W}(V)+1}\geq \frac{|V|}{\ell}.
\end{align}
Now, observe that
{\allowdisplaybreaks
\begin{align*}
\sum_{c' \in W} \Delta(c, c') 
&\ge \sum_{c' \in W} \Bigg(\sum_{i \in V} \frac{1}{w_i + 1} - \sum_{i\in N\colon c' \in A_i} \frac{1}{w_i} + 
   \sum_{i \in V\colon c' \in A_i}\left(\frac{1}{w_i} - \frac{1}{w_i+1}\right) \Bigg) \\
&= k \sum_{i \in V} \frac{1}{w_i + 1} - \sum_{c' \in W} \sum_{i\in N\colon c' \in A_i} \frac{1}{w_i} + 
     \sum_{c' \in W} \sum_{i \in V\colon c' \in A_i}\left(\frac{1}{w_i} - \frac{1}{w_i+1}\right)  \\
&= k \sum_{i \in V} \frac{1}{w_i + 1} - \sum_{i \in N} \sum_{c' \in W \cap A_i} \frac{1}{w_i} + 
     \sum_{i \in V} \sum_{c' \in W \cap A_i} \left(\frac{1}{w_i} - \frac{1}{w_i+1}\right) \\
&= k \sum_{i \in V} \frac{1}{w_i + 1} - n + |V| - \sum_{i \in V} \frac{w_i}{w_i+1} \\
&= k \sum_{i \in V} \frac{1}{w_i + 1} - n + |V| - \sum_{i \in V} \left(1-\frac{1}{w_i+1}\right) \\
&= (k+1) \sum_{i \in V} \frac{1}{w_i + 1} - n.
\end{align*}
}
Further, by equation~\eqref{eq:amhm} we obtain
$$
k\sum_{i \in V} \frac{1}{w_i + 1} - n \ge \frac{k|V|}{\ell} - n 
\ge \left\lceil\frac{\ell n}{k}\right\rceil\cdot \frac{k}{\ell} - n \ge 0,
$$
and hence 
$$
\sum_{c' \in W} \Delta(c, c') \ge
(k+1)\sum_{i \in V} \frac{1}{w_i + 1} - n \ge \sum_{i \in V} \frac{1}{w_i + 1} \ge \frac{|V|}{\ell} \ge \frac{n}{k}.
$$

From the pigeonhole principle it follows that there exists $c' \in W$ such that $\Delta(c, c') \geq \frac{n}{k^2}$, 
which means that $W$ could not have been returned by our local search algorithm. This completes the proof.
\qed\end{proof}

\begin{corollary}
For PAV all $\ell$-cohesive groups have average satisfaction of more than $\ell-1$.
\end{corollary}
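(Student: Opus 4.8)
The plan is to reuse the computation from the proof of Theorem~\ref{thm:avrep} almost verbatim, replacing the termination condition of $\ejrls$ by the optimality of the PAV committee. First I would argue by contradiction: suppose $W$ is a committee of maximum PAV-score and that, nonetheless, there is an $\ell$-cohesive group $V$ with $\aver{W}(V)\le\ell-1$. Exactly as in the theorem, since $V$ is $\ell$-cohesive there are $\ell$ commonly approved candidates, and at least one of them, call it $c$, must lie outside $W$ (otherwise we would have $\aver{W}(V)\ge\ell$).

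Next I would run the same chain of (in)equalities used in the theorem, noting that none of those steps used how $W$ was produced. For each $c'\in W$, the swap deleting $c'$ and inserting $c$ increases the PAV-score by $\Delta(c,c')$, and summing over all $c'\in W$ together with the inequality between arithmetic and harmonic means~\eqref{eq:amhm} yields $\sum_{c'\in W}\Delta(c,c')\ge \frac{n}{k}$. Since $\frac{n}{k}>0$, at least one summand $\Delta(c,c')$ is strictly positive, so some single-candidate swap strictly increases the PAV-score. This contradicts the assumption that $W$ maximizes the PAV-score, which completes the argument.

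Conceptually, the reason this works is that a PAV-optimal committee is in particular a local optimum for the single-swap neighborhood: no swap improves its score at all, hence certainly none improves it by at least $\frac{n}{k^2}$. Thus every PAV-optimal committee already satisfies the exit condition of $\ejrls$ and is a legitimate output of that algorithm, so the conclusion of Theorem~\ref{thm:avrep} transfers to it directly. There is essentially no obstacle here; the only points to verify are that the derivation of $\sum_{c'\in W}\Delta(c,c')\ge\frac{n}{k}$ holds for an arbitrary committee $W$ (it does), and that strict positivity of the total improvement is enough to contradict global optimality (which is weaker than the $\frac{n}{k^2}$ gap needed in the local-search setting, so the corollary is if anything easier).
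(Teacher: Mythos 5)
Your proposal is correct and matches the paper's reasoning: the paper proves this corollary with the one-line observation that Theorem~\ref{thm:avrep} applies to PAV because a committee of maximum PAV-score admits no improving swap at all (in particular none improving by $\frac{n}{k^2}$), which is exactly the transfer argument you give in your final paragraph. Your rerun of the chain of inequalities with strict positivity in place of the $\frac{n}{k^2}$ threshold is a sound, slightly more explicit version of the same argument, not a different route.
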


Note that Theorem~\ref{thm:avrep} applies not only to $\ejrls$ but also to PAV, as PAV selects a committee with 
the maximum PAV-score.

\begin{corollary}
$\ejrls$ satisfies extended justified representation.
\end{corollary}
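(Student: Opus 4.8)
The plan is to argue by contradiction: suppose $\ejrls$ returns a committee $W$ for which some $\ell$-cohesive group $V$ has $\aver{W}(V)\le\ell-1$. The goal is to exhibit a swap that raises the PAV-score by at least $\frac{n}{k^2}$, which contradicts the stopping condition of the while-loop and hence shows no such $W$ can be returned.

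First I would locate a good candidate to add. Since $V$ is $\ell$-cohesive, the voters in $V$ share at least $\ell$ commonly approved candidates. If all of these were already in $W$, every voter in $V$ would approve at least $\ell$ committee members, giving $\aver{W}(V)\ge\ell$ and contradicting our assumption; hence at least one commonly approved candidate $c$ lies outside $W$. Writing $w_i=|W\cap A_i|$, adding $c$ increases the PAV contribution of each voter $i\in V$ by $\frac{1}{w_i+1}$.

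The heart of the argument is an averaging step. Rather than analyze a single swap, I would consider, for every $c'\in W$, the swap that removes $c'$ and inserts $c$, and lower-bound its score change $\Delta(c,c')$ by the gain from adding $c$ (counted only over voters in $V$) minus the loss from removing $c'$ (over all voters approving $c'$). Summing $\Delta(c,c')$ over the $k$ candidates $c'\in W$, the removal terms collapse to $\sum_{i\in N}w_i\cdot\frac{1}{w_i}\le n$, the gain terms contribute $k\sum_{i\in V}\frac{1}{w_i+1}$, and a correction from voters in $V$ who also approve $c'$ accounts for the gap between $\frac{1}{w_i}$ and $\frac{1}{w_i+1}$. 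The crucial inequality is then arithmetic--harmonic means, which converts the assumption $\aver{W}(V)\le\ell-1$ into $\sum_{i\in V}\frac{1}{w_i+1}\ge\frac{|V|}{\aver{W}(V)+1}\ge\frac{|V|}{\ell}$. Combined with the size bound $|V|\ge\lceil\ell n/k\rceil$ coming from $\ell$-cohesiveness, this yields $\sum_{c'\in W}\Delta(c,c')\ge\frac{n}{k}$.

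The final step is the pigeonhole principle: since the total improvement over the $k$ candidates in $W$ is at least $\frac{n}{k}$, some individual swap must improve the PAV-score by at least $\frac{n}{k^2}$, precisely the threshold in the algorithm's loop, contradicting termination. The main obstacle I anticipate is the bookkeeping in this summation: the score change for a single swap can be negative, since removing $c'$ may hurt voters outside $V$, so no individual swap is obviously beneficial and only the amortized, summed bound saves the argument. Getting the correction term right and matching the resulting guarantee exactly to the $\frac{n}{k^2}$ threshold of $\ejrls$ is the delicate part.
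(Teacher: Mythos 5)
Your proposal is correct and takes essentially the same route as the paper, which proves this corollary by combining Theorem~\ref{thm:avrep} with Lemma~\ref{lem:avrep<->ejr}: you inline the paper's proof of Theorem~\ref{thm:avrep} exactly (a common candidate $c \notin W$, summing $\Delta(c,c')$ over all $c' \in W$ with the same correction term, the AM--HM inequality, the bound $|V| \ge \lceil \ell n/k \rceil$, and pigeonhole against the $\frac{n}{k^2}$ threshold). The only step you leave implicit is the trivial content of Lemma~\ref{lem:avrep<->ejr}: a failure of EJR for an $\ell$-cohesive group $V$ means $w_i \le \ell-1$ for every $i \in V$, hence $\aver{W}(V) \le \ell-1$, which is precisely your contradiction hypothesis, so your argument does establish EJR.
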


\begin{proof}
Let $E=(N, C)$ be an election, let $k$ be a positive integer with $1\le k\le |C|$
and let $W$ be a winning committee chosen by $\ejrls$. 
Further, let $V$ be an $\ell$-cohesive group.
Then, by Theorem~\ref{thm:avrep} it holds that $\aver{W}(V)>\ell-1$.
Consequently, by Lemma~\ref{lem:avrep<->ejr}, $W$ provides EJR.
\qed\end{proof}

\begin{proposition}\label{prop:ls_running_time}
$\ejrls$ is computable in polynomial time.
\end{proposition}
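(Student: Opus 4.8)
The plan is to bound the number of iterations of the local search loop and then bound the cost of each iteration.

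The plan is to show that $\ejrls$ terminates after polynomially many iterations and that each iteration can be executed in polynomial time. The key quantity to control is the PAV-score. First I would observe that the PAV-score of any committee is bounded: since $|A_i \cap W| \le k$ for every voter, the inner sum $\sum_{j=1}^{|A_i\cap W|} \frac1j$ is at most $\sum_{j=1}^{k}\frac1j \le 1 + \ln k$, so $\pavscore(W) \le n(1+\ln k)$, and clearly $\pavscore(W)\ge 0$. The crucial structural feature of the algorithm is that every swap the loop performs increases the PAV-score by at least $\frac{n}{k^2}$, as dictated by the loop guard.

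Next I would use a standard potential-function argument: since each iteration strictly increases $\pavscore(W)$ by at least $\frac{n}{k^2}$, and the total range of the PAV-score is at most $n(1+\ln k)$, the number of iterations is bounded by
\begin{align*}
\frac{n(1+\ln k)}{n/k^2} = k^2(1+\ln k),
\end{align*}
which is polynomial in the size of the input. Thus the loop executes at most polynomially many times.

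Finally I would argue that the work done per iteration is polynomial. Checking the loop guard amounts to examining all candidate pairs $(c,c')$ with $c\in W$ and $c'\in C\setminus W$; there are fewer than $k\cdot m$ such pairs, and for each we evaluate $\pavscore\big((W\setminus\{c\})\cup\{c'\}\big)$, which by its definition is a sum over $n$ voters, each requiring at most $O(k)$ arithmetic operations. Hence one iteration costs $O(k\cdot m\cdot n\cdot k)$ time, which is polynomial. Combining the polynomial iteration bound with the polynomial per-iteration cost gives an overall polynomial running time. The one point that merits care—and which I expect to be the main (if modest) obstacle—is confirming the upper bound on $\pavscore(W)$ and thereby justifying that the score range divided by the per-step increment $\frac{n}{k^2}$ is genuinely polynomial in $n$, $m$, and $k$; once that bound is in hand the remaining estimates are routine.
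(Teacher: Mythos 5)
Your proposal is correct and follows essentially the same argument as the paper: each swap raises the PAV-score by at least $\frac{n}{k^2}$, the score is bounded by $n(1+\ln k)$, so there are at most $O(k^2\ln k)$ iterations, each computable in polynomial time. Your added accounting of the per-iteration cost ($O(k\cdot m\cdot n\cdot k)$ for scanning all swaps) simply makes explicit what the paper dismisses as clear.
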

\begin{proof}
A single improving swap can be clearly found and executed in polynomial time. Now, let us assess how many 
improvements the local search algorithm may perform. Each improvement increases the total PAV-score 
of a committee by at least $\frac{n}{k^2}$. 
The maximum score a committee can get is $n\cdot (1+\nicefrac 1 2 +\dots+\nicefrac 1 k)=O(n\ln(k))$. 
Thus, there may be at most $O(k^2\ln(k))$ improving swaps.
\qed\end{proof}

Observe that Proposition~\ref{prop:ls_running_time} relies on
having a threshold of $\frac{n}{k^2}$ in the definition of the local search algorithm. 
If we perform a swap each time when 
$\pavscore\big((W \setminus \{c\}) \cup \{c'\}\big) \geq \pavscore(W)$, 
this could potentially lead to a superpolynomial running time, 
since by a naive argument we could only conclude that each swap 
increases the total PAV score of a committee by one over the least common 
multiple of values $1, 2, \ldots, k$, which can be exponential with respect to $k$.

We will now show that the satisfaction guarantee given by Theorem~\ref{thm:avrep} cannot be improved.

\begin{figure}
\centering
\begin{subfigure}[t]{0.4\textwidth}
\centering
\begin{tikzpicture}[every node/.style={circle,draw=black, fill=black, inner sep=0pt,minimum size=3pt}]
    \node[
      regular polygon,
      regular polygon sides=4,
      minimum width=40mm,
      fill=white,draw=none
    ] (PG) {}
      (PG.corner 1) node[draw] (PG1) {}
      (PG.corner 2) node[draw] (PG2) {}
      (PG.corner 3) node[draw] (PG3) {}
      (PG.corner 4) node[draw] (PG4) {}
    ;
    \foreach \S/\E in {
      1/2, 2/3, 3/4, 4/1%
    } {
    \foreach \U in {
      0.33, 0.66
    } {
    \node[draw] at ($(PG\S)!\U!(PG\E)$) {};
    }};
    \draw[very thick,rounded corners=8,rotate=2*90,inner sep=20pt]($(PG1) - (.3,0.6)$)rectangle($(PG4) -(-.3,-0.6)$);
    \draw[very thick,rounded corners=8,rotate=3*90,inner sep=20pt]($(PG2) - (.3,0.6)$)rectangle($(PG1) -(-.3,-0.6)$);
    \draw[very thick,rounded corners=8,rotate=4*90,inner sep=20pt]($(PG3) - (.3,0.6)$)rectangle($(PG2) -(-.3,-0.6)$);
    \draw[very thick,rounded corners=8,rotate=5*90,inner sep=20pt]($(PG4) - (.3,0.6)$)rectangle($(PG3) -(-.3,-0.6)$);
  \end{tikzpicture}
    \caption{Example for $k=3$}
    \label{fig:k=4}
\end{subfigure}\quad%
\begin{subfigure}[t]{0.4\textwidth}
\centering
\begin{tikzpicture}[scale=1,every node/.style={circle,draw=black, fill=black, inner sep=0pt,minimum size=3pt}]
    \node[
      regular polygon,
      regular polygon sides=5,
      minimum width=40mm,
      fill=white,draw=none
    ] (PG) {}
      (PG.corner 1) node[draw] (PG1) {}
      (PG.corner 2) node[draw] (PG2) {}
      (PG.corner 3) node[draw] (PG3) {}
      (PG.corner 4) node[draw] (PG4) {}
      (PG.corner 5) node[draw] (PG5) {}
    ;
    \foreach \S/\E in {
      1/2, 2/3, 3/4, 4/5, 5/1%
    } {
    \foreach \U in {
      0.25, 0.5, 0.75
    } {
    \node[draw] at ($(PG\S)!\U!(PG\E)$) {};
    }};
    \draw[very thick,rounded corners=8,rotate=72,inner sep=20pt]($(PG5) + (.6,0.3)$)rectangle($(PG4) +(-.6,-0.3)$);
    \draw[very thick,rounded corners=8,rotate=2*72,inner sep=20pt]($(PG1) + (.6,0.3)$)rectangle($(PG5) +(-.6,-0.3)$);
    \draw[very thick,rounded corners=8,rotate=3*72,inner sep=20pt]($(PG2) + (.6,0.3)$)rectangle($(PG1) +(-.6,-0.3)$);
    \draw[very thick,rounded corners=8,rotate=4*72,inner sep=20pt]($(PG3) + (.6,0.3)$)rectangle($(PG2) +(-.6,-0.3)$);
    \draw[very thick,rounded corners=8,rotate=5*72,inner sep=20pt]($(PG4) + (.6,0.3)$)rectangle($(PG3) +(-.6,-0.3)$);
  \end{tikzpicture}
  \caption{Example for $k=4$}
  \label{fig:k=5}
\end{subfigure}%
 \caption{A visualization of the profiles used in Example~\ref{example}. Dots represent voters and boxes represent candidates; candidates are approved by those voters that the respective boxes contain.}
\end{figure}
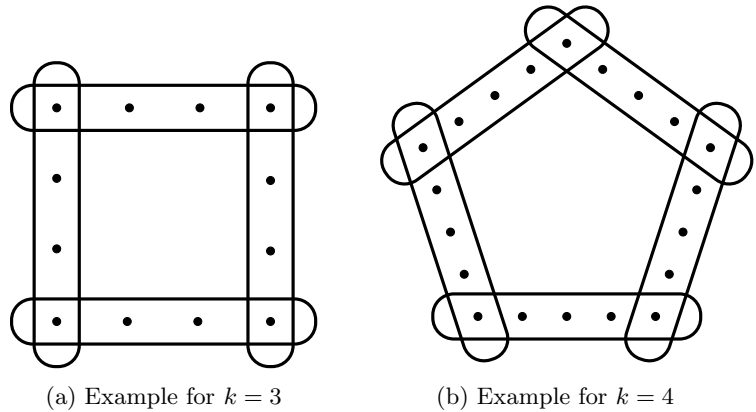

\begin{example}\label{example}
Consider the following election.
	\begin{align*}
		&1 \times \{ d,a\} & & 2 \times \{ a\} \\
		&1 \times \{ a,b\} & & 2 \times \{ b\} \\		
		&1 \times \{ b,c\} & & 2 \times \{ c\} \\
		&1 \times \{ c,d\} & & 2 \times \{ d\} 
	\end{align*}

This profile is schematically shown in Figure~\ref{fig:k=4}. For $k=3$, we have $\frac{n}{k}=4$ and 
consequently all voters that approve a fixed candidate form a $1$-cohesive group. The profile is symmetric with 
respect to candidates so without loss of generality assume that committee $\{a,b,c\}$ is chosen. Let us consider 
the voters who approve $d$, i.e., $\{d,a\}$, $2\times \{d\}$, and $\{c,d\}$. The average satisfaction of 
this group is $\nicefrac{1}{2}$. Hence we have found a profile where it is impossible to guarantee an average 
satisfaction for $1$-cohesive groups that is better than $\nicefrac{1}{2}$.

Let us now extend this example for $k=4$, moving from a rectangle shape to a pentagon (see Figure~\ref{fig:k=5}).
	\begin{align*}
		&1 \times \{ e,a\} & & 3 \times \{ a\} \\
		&1 \times \{ a,b\} & & 3 \times \{ b\} \\		
		&1 \times \{ b,c\} & & 3 \times \{ c\} \\
		&1 \times \{ c,d\} & & 3 \times \{ d\} \\
		&1 \times \{ d,e\} & & 3 \times \{ e\} 		
	\end{align*}

By the same argument as before, we can assume without loss of generality that $e$ is not contained in the 
committee. The average satisfaction of the respective group is $\frac{2}{5}$ and we see that we 
cannot guarantee an average satisfaction of more that $\frac25$ for $1$-cohesive groups. 
If we generalize these types of examples to larger $k$, we can deduce that we cannot guarantee an average
satisfaction of more than $\frac{2}{k}$ for $1$-cohesive groups. Hence, for arbitrary $k$, there is no 
positive constant $\gamma$ such that we can guarantee an average satisfaction of $\gamma$ for 
$1$-cohesive groups.
\end{example}

\begin{proposition}
Let $\ell$ be a positive integer and $\gamma>0$.
It is not possible to guarantee an average satisfaction of $\ell-1+\gamma$ for $\ell$-cohesive groups.
\end{proposition}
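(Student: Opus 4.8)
The plan is to lift the $\ell=1$ construction of Example~\ref{example} to general $\ell$ by \emph{cloning} candidates. First I would fix a large parameter $k_0$ and take the cyclic base instance underlying Example~\ref{example}: candidates $c_0,\dots,c_{k_0}$ arranged in a cycle (indices mod $k_0+1$), one pair voter approving $\{c_i,c_{i+1}\}$ for each edge, and $k_0-1$ singleton voters approving $\{c_i\}$ for each $i$, so that $n_0=k_0(k_0+1)$ and $n_0/k_0=k_0+1$. For $\ell=1$ this already exhibits average satisfaction $\tfrac{2}{k_0+1}\to 0=\ell-1$, so the only remaining task is to inflate the single shared candidate of each group into $\ell$ shared candidates without destroying cohesiveness.

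Next I would replace every candidate $c_i$ by $\ell$ clones $c_i^1,\dots,c_i^\ell$, let each voter approve all clones of the candidates it approved in the base instance, and set the committee size to $k=\ell k_0$. The key bookkeeping step is to verify that the group $V_j$ consisting of all voters approving (the clones of) $c_j$ — namely its $k_0-1$ singletons and its two incident pair voters — is $\ell$-cohesive: its members share exactly the $\ell$ clones $\{c_j^1,\dots,c_j^\ell\}$, and $|V_j|=k_0+1$ equals $\bigl\lceil \ell n/k\bigr\rceil=\bigl\lceil \ell k_0(k_0+1)/(\ell k_0)\bigr\rceil=k_0+1$. Thus cloning scales the shared-candidate count and the cohesiveness threshold by exactly the amounts needed for the groups to survive.

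The heart of the argument treats an \emph{arbitrary} committee $W$ with $|W|=\ell k_0$. Since there are $\ell(k_0+1)$ clones, exactly $\ell$ of them are omitted; writing $e_j$ for the number of omitted clones of $c_j$, we have $\sum_j e_j=\ell\ge 1$, so by pigeonhole some $c_{j^\ast}$ satisfies $e_{j^\ast}\ge 1$. Every voter in $V_{j^\ast}$ contributes $\ell-e_{j^\ast}$ to its satisfaction through the clones of $c_{j^\ast}$, and only the two pair voters contribute extra, from a single neighbouring candidate each; a direct count gives
\[
\aver{W}(V_{j^\ast})=(\ell-e_{j^\ast})+\frac{2\ell-e_{j^\ast-1}-e_{j^\ast+1}}{k_0+1}\le(\ell-1)+\frac{2\ell}{k_0+1}.
\]
Choosing $k_0$ so large that $\tfrac{2\ell}{k_0+1}<\gamma$ makes this strictly below $\ell-1+\gamma$.

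Since $j^\ast$ and the resulting bound depend on $W$ only through the omitted clones, the conclusion holds for \emph{every} committee of size $\ell k_0$, so no rule can guarantee average satisfaction $\ell-1+\gamma$ to all $\ell$-cohesive groups. The step I expect to be the main obstacle is precisely this ``arbitrary $W$'' part: a committee need not omit all clones of one candidate, so I cannot reduce to a single symmetric committee as Example~\ref{example} does. The resolution is that the estimate $\ell-e_{j^\ast}\le\ell-1$ needs only \emph{one} omitted clone of $c_{j^\ast}$, so however the $\ell$ omitted clones are scattered among the originals, the worst case is still dominated by the uniform bound above.
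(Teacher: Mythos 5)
Your proof is correct and follows essentially the same route as the paper: the cyclic base instance of Example~\ref{example}, cloning each candidate into $\ell$ copies, and scaling the committee size to $k=\ell k_0$. If anything, your treatment is more careful than the paper's own terse argument: your pigeonhole bookkeeping with the $e_j$'s handles \emph{arbitrary} committees (where the $\ell$ omitted clones may be scattered rather than concentrated on one candidate) and yields the correct threshold $k_0+1>\frac{2\ell}{\gamma}$, whereas the paper glosses over exactly this point, states the extra satisfaction as $\aver{W}(V)=\ell-1+\frac{2}{\gamma}$ (evidently a typo for a term vanishing in $k$), and uses the threshold $k>\frac{2}{\gamma}$, which misses the factor $\ell$ coming from the pair voters' neighbouring clones.
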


\begin{proof}
We have seen in Example~\ref{example} how to construct profiles where $1$-cohesive groups 
cannot have an average satisfaction better than $\frac{2}{k}$. 
If we choose $k>\frac 2 \gamma$, then we have shown the statement for $\ell=1$.
This construction can easily be generalized for arbitrary $\ell$. 
We replace each candidate with $\ell$ copies; voters approve all copies of previously approved candidates. 
In this case there is at least one $\ell$-cohesive group with only $\ell-1$ 
of their joint candidates approved; let this group be $V$. 
We have $\aver{W}(V)=\ell-1+\frac 2 \gamma$. As before, for $k>\frac 2 \gamma$ 
we obtain an example showing that an average satisfaction of $\ell-1+\gamma$ 
for $\ell$-cohesive groups cannot be guaranteed.
\qed\end{proof}


\bibliographystyle{splncs03}
\bibliography{main}

\end{document}